\documentclass[letterpaper, 10 pt, conference]{ieeeconf}
\IEEEoverridecommandlockouts 
\usepackage{max_paper}

\title{\LARGE \bf
Topological Obstructions to the Existence of Control Barrier Functions}

\author{Massimiliano de Sa and Aaron D. Ames%
\thanks{This research is supported by AFOSR Grant No. 113535-19668, Hybrid Dynamics - Deconstruction and Aggregation (HyDDRA).}% 
\thanks{The authors are with the Department of Control and Dynamical Systems, California Institute of Technology, Pasadena CA 91125, U.S.A.
        {\tt\small \{mdesa, ames\}@caltech.edu}}%
}

\begin{document}

\maketitle
\thispagestyle{empty}
\pagestyle{empty}

\renewcommand{\bf}{\mathbf{f}}

%%%%%%%%%%%%%%%%%%%%%%%%%%%%%%%%%%%%%%%%%%%%%%%%%%%%%%%%%%%%%%%%%%%%%%%%%%%%%%%%
\begin{abstract}

In 1983, Brockett developed a topological necessary condition for the existence of continuous, asymptotically stabilizing control laws. Building upon recent work on necessary conditions for set stabilization, we develop Brockett-like necessary conditions for the existence of control barrier functions (CBFs). By leveraging the unique geometry of CBF safe sets, we provide simple and self-contained derivations of necessary conditions for the existence of CBFs and their safe, continuous controllers. We demonstrate the application of these conditions to instructive examples and kinematic nonholonomic systems, and discuss their relationship to Brockett's necessary condition.

\end{abstract}

\section{Introduction}
\label{sec:introduction}
In recent years, control barrier functions (CBFs) have emerged as a leading framework for the design of safety-critical controllers \cite{ames2016control}. Herein, a safety specification for a control system is encoded as the superlevel set of a function, and safety is achieved by rendering this set forward invariant.
In the time since the introduction of CBFs, the community has undertaken a thorough study of the synthesis of CBFs and their associated safety-critical controllers \cite{ames2019control}. Further, the dynamical properties of control systems with CBF-based safety-critical controllers have been extensively studied \cite{mestres2025control, reis2020control}. Though the problem of discerning closed-loop dynamical properties from a CBF and a controller has been well-studied, the converse problem---studying the \textit{existence} of a CBF and a safety-critical controller using their closed-loop dynamical properties---has been comparatively overlooked. This problem finds valuable applications in the automated synthesis of safety certificates \cite{dawson2023safe}, where it can inform whether a CBF exists for a proposed safe set specification.

In the now-classic work of Brockett, the analogous converse problem for the asymptotic stabilization of equilibria is treated \cite{brockett1983asymptotic}. By studying the topological and dynamical properties of an asymptotically stable system, Brockett derived necessary conditions that a control system asymptotically stabilizable by continuous feedback must satisfy. In doing so, he posed an elegant test for the continuous stabilizability of a broad class of control systems. In the context of CBFs, the power of Brockett's necessary condition begs the question: do there exist Brockett-like necessary conditions for safety?

Recently, in the work of Kvalheim and Koditschek, this question was answered in the affirmative \cite{kvalheim2021necessary}. Using cohomological tools, it was shown that Brockett-like necessary conditions can be derived for the stabilization and strict forward invariance of a class of compact sets. 
\begin{figure}
    \centering
    \includegraphics[width=0.95\linewidth]{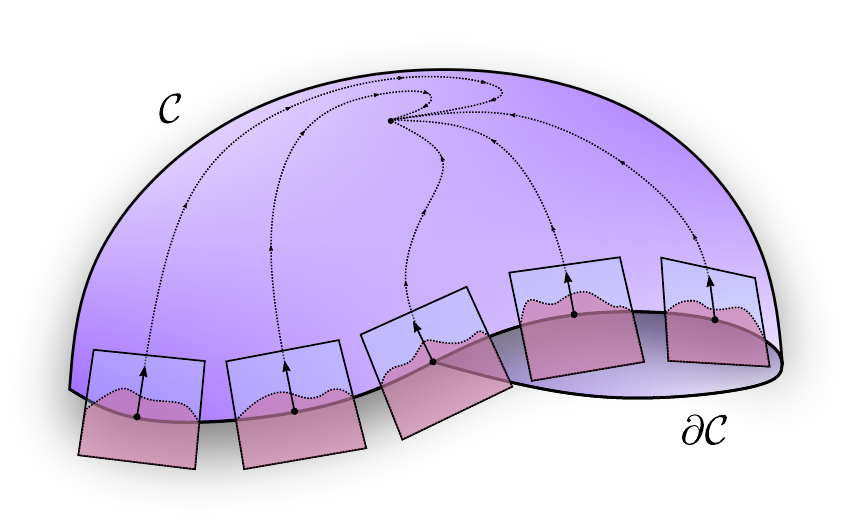}
    \caption{For a safe set $\C$ of a certain geometry and a safe vector field $X$, small \textit{perturbations} $Z$ (in red) of $X$ satisfy $X_p - Z_p = 0_p$ for some point $p$. This observation can be used to derive necessary conditions for safety.}
    \label{fig:cbf-brockett}
    \vspace{-10pt}
\end{figure}

Although the topological and dynamical machinery of \cite{kvalheim2021necessary} leads to general and powerful necessary conditions, it carries with it several drawbacks for the specific setting of control barrier functions. Compared to the potentially pathological safe sets of \cite{kvalheim2021necessary}, the safe sets studied in the CBF framework carry a substantial geometric structure. This suggests that a more elementary, geometric path to deriving necessary conditions for safety is possible within the CBF framework. The tools used by \cite{kvalheim2021necessary} also necessitate the assumption of unique integrability of the closed-loop system, a potentially unnecessary requirement for the CBF case. Further, in addition to the strict form of forward-invariance considered by \cite{kvalheim2021necessary}, the non-strict case is also of interest in the CBF literature.

These points raise the following questions: by focusing on the specific setting of control barrier functions, can one chart a more elementary path to deriving necessary conditions for safety? Further, can the structure of the CBF framework be utilized to relax the unique integrability assumption of \cite{kvalheim2021necessary} and treat the case of non-strict forward invariance?

In this work, we shall utilize the unique geometric structure of CBFs to provide an elementary, geometric approach to deriving necessary conditions for safety. Using this geometric structure, we shall relax the unique integrability assumption of \cite{kvalheim2021necessary} and treat both the non-strict and strict cases of forward invariance. By developing necessary conditions for safety through this CBF-theoretic lens, we hope that these conditions and their derivations will be both accessible and useful to the broad CBF community. 

The remainder of the paper is structured as follows. In Section II, we review the background on geometry, CBFs and Brockett's necessary condition. In Section III, we develop a result on the existence of zeros of vector fields. In Section IV, we use this result to derive Brockett-like necessary conditions for CBFs. Here, we also demonstrate their application to practical systems and discuss their relation to Brockett's condition. Finally, in Section V, we give concluding remarks.

\section{Preliminaries}\label{sec:prelims}
To begin, we review the essential background on geometry, CBFs, and Brockett's condition. To simplify our exposition, we work with \textit{smooth} $(C^\infty)$ CBFs and manifolds throughout.

\subsection{Differential Geometry}
First, we informally recall the key concepts of differential geometry, deferring precise definitions to \cite{lee2012intro}. A \textit{smooth manifold with boundary} is a space $\M$ for which every point has a neighborhood $U$ diffeomorphic to an open subset of $\R^n$ or $\h^n = \{x \in \R^n : x^n \geq 0\}$. A pair $(U, \phi)$ of such a neighborhood and diffeomorphism is called a \textit{local coordinate chart}. The \textit{boundary} $\partial \M$ of $\M$ consists of all $p \in \M$ mapped by some $(U, \phi)$ to an $x \in \h^n$ with $x^n = 0$. The \textit{interior} $\Int \M$ is the complement of $\partial \M$ in $\M$. If $\partial \M = \emptyset$, $\M$ is simply called a \textit{smooth manifold}. The \textit{dimension} $\dim(\M)$ of $\M$ is the dimension $n$ of $\R^n$ or $\h^n$; we will use $\dim(\M) = n$ throughout.

A tangent vector to $\M$ at $p$ is written $v_p$, the tangent space to $\M$ at $p$ as $T_p \M$ and the tangent bundle as $T\M$. For a smooth map $F: \M_1 \to \M_2$ between manifolds, the differential of $F$ at $p$ (analogous to the Jacobian in $\R^n$) is a linear map $dF_p : T_p \M_1 \to T_{F(p)} \M_2$. The set of all smooth functions $h: \M \to \R$ is written $C^\infty(\M)$; the differential $dh_p$ of $h$ at $p$ is interpreted as a linear map from $T_p \M \to \R$.

A \textit{vector field} on $\M$ is a map $X: \M \to T\M$ assigning each $p \in \M$ to a tangent vector $v_p \in T_p \M$. We write the value of $X$ at $p$ as $X_p$ or $X|_p$. We write $\X^0(\M)$ and $\X(\M)$ for the sets of all continuous and smooth vector fields on $\M$. For any local coordinate chart $(U, \phi)$, there is a set of smooth \textit{coordinate vector fields} $\{\frac{\partial}{\partial x^i}\}_{i = 1}^n$ defined on $U$.

A vector field $X \in \X^0(\M)$ is \textit{uniquely integrable} if it has a unique maximal flow $\varphi$. This is a continuous map $\varphi: D \subseteq \R \times \M \to \M$, where each slice $D^{(p)} = \{t \in \R : (t, p) \in D\}$ is the maximal interval of existence at $p$, satisfying $\tfrac{d}{d\tau}\big|_{\tau = t}\varphi_{\tau}(p) = X_{\varphi_{t}(p)}$ and $\varphi_0(p) = p$ for all $(t, p) \in D$. We write $D^{(p)}_{\geq 0}$ for $D^{(p)} \cap \R_{\geq 0}$ and $D^{(p)}_{>0}$ for $D^{(p)} \cap \R_{>0}$. Given a function $h \in C^\infty(\M)$, its rate of change along the flow of $X$ is computed $\frac{d}{dt}\big|_{t = 0} h(\varphi_t(p)) = dh_p X_p$.

A \textit{Riemannian metric} on $\M$ is a smooth assignment of each $p \in \M$ to an inner product $\braket{\cdot, \cdot}$ on $T_p \M$, which induces a norm $\norm{\cdot}$ on $T_p \M$. The \textit{gradient} of $h \in C^\infty(\M)$ with respect to a metric is the smooth vector field $p \mapsto \grad h|_p$ satisfying $\braket{\grad h|_p, v_p} = dh_p v_p$ across $T\M$.

\subsection{Nonlinear Control Systems}
\begin{defn}\label{defn:nl-sys}
    A \textit{nonlinear control system} is a tuple $\Sigma = (\U, \M, F)$ of a set $\U$ (the \textit{input space}), a smooth manifold $\M$ (the \textit{state space}), and a map $F: \M \times \U \to T\M$ (the \textit{dynamics}), taking $(p, u) \in \M \times \U$ to vectors $v_p\in T_p \M$.
\end{defn}
\begin{example}\label{ex:eucl-nl-sys}
    Let $f: \R^n \times \U \to \R^n$, for $\U \subseteq \R^m$. The Euclidean control system $\dot x = f(x, u)$ induces a nonlinear control system $\Sigma = (\U, \R^n, F)$ in the sense of Definition \ref{defn:nl-sys}. Recalling that $T\R^n$ is diffeomorphic to $\R^n \times \R^n$, the dynamics $F$ of $\Sigma$ are specified by $F: (x, u) \mapsto (x, f(x, u))$.
\end{example}
The equations of motion of a system $\Sigma = (\U, \M, F)$ are written $\dot p = F(p, u)$. The dynamics of $\Sigma$ under a controller $\kappa: \M \to \U$ are $\dot p = F(p, \kappa(p))$. As $p \mapsto F(p, \kappa(p))$ assigns each $p \in \M$ to a $\dot p = F(p, \kappa(p)) \in T_p \M$, it is a vector field on $\M$; we call this the \textit{closed-loop vector field}.

\begin{defn}\label{defn:contr-aff-sys}
    A \textit{control-affine system} is a tuple $\Sigma = (\M, X_0,..., X_m)$ of a smooth manifold $\M$, a \textit{drift vector field} $X_0$ on $\M$, and \textit{input vector fields} $X_1, ..., X_m$ on $\M$. $\Sigma$ is \textit{continuous/locally Lipschitz} if each $X_i$ is so.
\end{defn}

A control-affine system $\Sigma = (\M, X_0,..., X_m)$ induces a nonlinear control system $\tilde \Sigma = (\U, \M, F)$ with $\U = \R^m$ and $F: \M \times \R^m \to T\M$ the map $(p, u) \mapsto X_0|_p + \sum_{i = 1}^m u^i X_i|_p $, for each $u^i \in \R$ a component of $u = (u^1, ..., u^m) \in \R^m$.

\subsection{Control Barrier Functions \& Safe Set Geometry}
Let $\Sigma = (\U, \M, F)$ be a control system and $\C \subseteq \M$ a prescribed \textit{safe set} of states in which $\Sigma$ must remain. The core aim of safety-critical control is to design a controller $\kappa$ that renders $\C$ \textit{forward invariant} for the closed-loop system.
\begin{defn}\label{defn:fwd-inv}
    Let $X \in \X^0(\M)$ be uniquely integrable. A subset $\C \subseteq \M$ is said to be \textit{forward invariant} with respect to $X$ if $p \in \C$ implies $\varphi_t(p) \in \C, \; \forall t \in D^{(p)}_{\geq 0}$.
\end{defn}
In the \textit{control barrier function (CBF)} framework, one considers safe sets of the form:
\begin{align}
    \C = \{p \in \M : h(p) \geq 0\} = h^{-1}(\R_{\geq 0}) \label{eqn:safe-set},
\end{align}
where $h : \M \to \R$ is a smooth function with zero a regular value ($dh_p \neq 0$ whenever $h(p) = 0$). Such a set is a \textit{regular domain} in $\M$: a properly embedded submanifold of $\M$ with boundary for which $\dim \C = \dim \M$ \cite[Prop. 5.47]{lee2012intro}.

The forward invariance of a regular domain $\C$ with respect to a vector field $X$ is characterized by the behavior of $X$ at $\partial \C$. For $p \in \partial \C$, a vector $v_p \in T_p \C \setminus T_p \partial \C$ is \textit{inward-pointing} at $\partial \C$ if there is a curve $\gamma: [0, \epsilon) \to \C$ with $\gamma(0) = p$ and $\gamma'(0) = v_p$. Similarly, a vector $w_p$ is \textit{outward-pointing} at $\partial \C$ if $-w_p$ is inward-pointing. For the regular domain \eqref{eqn:safe-set}, a vector $v_p$ is inward-pointing at $\partial \C$ if and only if $dh_p v_p > 0$, and is tangent to $\partial \C$ if and only if $dh_p v_p = 0$.
\begin{theo}\label{theo:geom-nagumo}
    Let $X \in \X^0(\M)$ be uniquely integrable with flow $\varphi$ and $\C$ a regular domain in $\M$.
    \begin{enumerate}
        \item If for each $ p \in \partial \C$, $X_p$ is inward-pointing or tangent to $\partial \C$, then $\C$ is forward invariant with respect to $X$.
        \item If for each $p \in \partial \C$, $X_p$ is inward-pointing at $\partial \C$, then for all $p \in \partial \C$ and $t \in D^{(p)}_{> 0}$, $\varphi_t(p) \in \Int \C$.
    \end{enumerate}
\end{theo}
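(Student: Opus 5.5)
The plan is to fix a defining function first and then prove part 2 before part 1, since part 1 will be reduced to the strict case by a perturbation argument. By \cite[Prop.~5.47, Thm.~5.48]{lee2012intro}, every regular domain admits a smooth $h$ with $\C = h^{-1}(\R_{\geq 0})$, $\partial\C = h^{-1}(0)$, and $dh_p \neq 0$ on $\partial \C$. With this $h$, inward-pointing at $p \in \partial\C$ means $dh_p X_p > 0$, and tangent means $dh_p X_p = 0$.

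\textbf{Part 2.} I will prove a stronger statement that does not use uniqueness: every $C^1$ integral curve $\gamma$ of a continuous field $Y$ with $dh\,Y > 0$ on $\partial\C$ and $\gamma(0) \in \C$ satisfies $\gamma(t) \in \Int\C$ for all $t > 0$ in its domain.
\begin{itemize}
\item For $\gamma(0) \in \partial\C$, we have $\frac{d}{dt}\big|_{0} h(\gamma(t)) = dh_{\gamma(0)} Y_{\gamma(0)} > 0$. Hence $h(\gamma(t)) > 0$ on some interval $(0,\delta)$.
\item Suppose $\gamma$ later fails to lie in $\Int\C$. Let $t^* = \inf\{t > 0 : h(\gamma(t)) \leq 0\}$, which is positive by the previous step (for $\gamma(0) \in \Int\C$ it is positive by continuity). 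Then $h(\gamma(t)) > 0$ on $(0,t^*)$ and $h(\gamma(t^*)) = 0$, so $\frac{d}{dt}\big|_{t^*} h(\gamma(t)) \leq 0$.
\item This contradicts $dh\,Y > 0$ at the boundary point $\gamma(t^*)$.
\end{itemize}
Applying this to $Y = X$ and $\gamma = \varphi_{(\cdot)}(p)$ gives part 2.

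\textbf{Part 1.} Here only $dh\,X \geq 0$ on $\partial\C$, and uniqueness becomes essential. Fix a Riemannian metric and set $X^\epsilon = X + \epsilon\,\grad h$. On $\partial\C$ this gives $dh\,X^\epsilon \geq \epsilon \norm{dh}^2 > 0$, so $X^\epsilon$ is strictly inward-pointing. The argument then runs as follows.
\begin{enumerate}
\item Fix $p \in \C$ and $T \in D^{(p)}_{\geq 0}$. By compactness, the flow segment $\varphi_{[0,T]}(p)$ has a compact neighborhood $K$.
\item By Peano's theorem, applied chartwise, each $X^\epsilon$ has a $C^1$ integral curve $\gamma^\epsilon$ starting at $p$. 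For $\epsilon$ small, it remains in $K$ on $[0,T]$; this needs the standard continuation argument together with the uniform bound on $X^\epsilon$ over $K$.
\item By the strict statement above, $\gamma^\epsilon([0,T]) \subseteq \C$.
\item The family $\gamma^\epsilon$ is equicontinuous on $[0,T]$, so Arzel\`a--Ascoli yields a subsequence converging uniformly to some $\gamma$. Passing to the limit in the integral equation in local charts shows that $\gamma$ is an integral curve of $X$ starting at $p$.
\item Unique integrability gives $\gamma = \varphi_{(\cdot)}(p)$ on $[0,T]$.
\item Since $\C$ is closed, $\varphi_t(p) \in \C$ for all $t \in [0,T]$.
\end{enumerate}

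The main obstacle is the compactness step on a manifold rather than in $\R^n$. Specifically, one must guarantee that the perturbed curves exist on all of $[0,T]$ and stay in a fixed compact set. My plan for this is to cover $\varphi_{[0,T]}(p)$ by finitely many chart neighborhoods and argue piecewise in time using the Euclidean Peano and Arzel\`a--Ascoli results. If that becomes messy, I will instead argue locally. I would take the putative exit time $s = \sup\{t : \varphi_{[0,t]}(p) \subseteq \C\}$, work in a single chart around $\varphi_s(p)$, and derive a contradiction from the short-time convergence of the perturbed curves started at $\varphi_s(p)$.
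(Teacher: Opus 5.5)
Your proof is correct, but it takes a genuinely different route from the paper. The paper gives no direct argument: it defers item (1) to Hartman's invariance theorem \cite{hartman1972invariant} together with \cite{de2025bundles}, and item (2) to a remark in \cite{kvalheim2021necessary}. You instead argue from the defining function $h$. The strict case uses a first-exit-time argument, and the non-strict case uses the perturbation $X + \epsilon\grad h$, Peano existence, and an Arzel\`a--Ascoli limit. The citation route is short and rests on invariance theorems valid for arbitrary closed sets. Yours is elementary and uses the regular-domain geometry the same way the paper later does in Proposition~\ref{prop:ec-zeros}: push the field strictly inward, then pass to a limit by compactness. It also shows something the statement hides. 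Item (2) holds for every integral curve and needs no uniqueness, which fits the paper's goal of dropping unique integrability. Item (1) uses uniqueness only to control the limit of the perturbed curves.

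One step needs repair. In your step 2, confinement of $\gamma^\epsilon$ to $K$ on all of $[0,T]$ does not follow from the uniform bound and continuation alone. The bound only gives a uniform short existence time in a chart. Nothing in that reasoning stops the perturbed curves from tracking some other solution of $X$ out of $K$, which is exactly what happens when uniqueness fails. So confinement must be proved together with your steps 4--5, by a Kamke-type argument in which uniqueness identifies the limit on each short piece. Your exit-time fallback does this cleanly, and it is the version to write:
\begin{enumerate}
\item Suppose $s < \sup D^{(p)}$ and put $q = \varphi_s(p) \in \C$.
\item The perturbed curves from $q$ exist on a uniform interval $[0,\tau]$ inside one chart, and they lie in $\C$ by the strict case.
\item They subconverge to an integral curve of $X$ from $q$, which by uniqueness is $t \mapsto \varphi_{s+t}(p)$.
\item So $\varphi_t(p)$ stays in $\C$ slightly past $s$, contradicting the choice of $s$.
\end{enumerate}
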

\begin{proof}
    Item (1) follows directly from \cite[Thm. 1]{hartman1972invariant} and \cite[Thm. 1(i)]{de2025bundles}, while item (2) is remarked by \cite[p. 669]{kvalheim2021necessary}.    
\end{proof}

For the set \eqref{eqn:safe-set}, Theorem \ref{theo:geom-nagumo} gives a non-strict condition $dh_p X_p \geq 0$ on $\partial \C$ for forward invariance, and a strict condition $dh_p X_p > 0$ on $\partial \C$ for \textit{strict} forward invariance. These conditions motivate the following definition. 
\begin{defn}\label{defn:cbf}
    Let $\Sigma = (\U, \M, F)$ be a nonlinear control system. A function $h \in C^\infty(\M)$ is a \textit{control barrier function (CBF)} for $\Sigma$ with safe set $\C = h^{-1}(\R_{\geq 0})$ if it has zero as a regular value and there exists a function\footnote{A function $\alpha: \R \to \R$ belongs to the class $\K^e_\infty$ if it is continuous, strictly increasing, and satisfies $\alpha(0) = 0$ and $\; \lim_{r \to \pm \infty}\alpha(r) = \pm \infty$.} $\alpha \in \K^e_\infty$ for which
    \begin{align}
        \sup_{u \in \U} dh_p F(p, u) \geq -\alpha(h(p)), \; \forall p \in \C. \label{eq:cbf}
    \end{align}
    If \eqref{eq:cbf} holds on a neighborhood $\D$ of $\C$, we call $h$ a \textit{CBF on $\D$}. If \eqref{eq:cbf} holds with \textit{strict} inequality, we call $h$ a \textit{strict CBF}.
\end{defn}
\begin{example}
    For the Euclidean system $\dot x = f(x, u)$ of Example \ref{ex:eucl-nl-sys}, \eqref{eq:cbf} is equivalent to the ``classic" CBF condition $\sup_{u \in \U} \frac{\partial h}{\partial x} f(x, u) \geq -\alpha(h(x)), \; \forall x \in \C$ of \cite{ames2016control}.
\end{example}
Since $h(p) = 0$ on $\partial \C$, $\alpha(h(p)) = 0$ on $\partial \C$ as well. Thus, \eqref{eq:cbf} reduces to $\sup_{u \in \U} dh_p F(p, u) \geq 0$ on $\partial \C$, resembling the forward invariance condition of Theorem \ref{theo:geom-nagumo}. Now, we aim to use condition \eqref{eq:cbf} to obtain a controller $\kappa$ for which
\begin{align}
    dh_p F(p, \kappa(p)) \geq -\alpha(h(p)), \; \forall p \in \C. \label{eq:cl-loop-cbf}
\end{align}
\begin{prop}\label{prop:strict-aff}
    Let $\Sigma = (\M, X_0, ..., X_m)$ be a continuous control-affine system and $h$ a strict CBF for $\Sigma$ with $\alpha$ and $\C$ as in \eqref{eq:cbf}. Then, there exists a smooth control law $\kappa: \M \to \R^m$ and a neighborhood $\D$ of $\C$ for which
    \begin{align}
        dh_p\Big[X_0|_p + \sum_{i = 1}^m \kappa^i(p) X_i|_p\Big] > -\alpha(h(p)), \; \forall p \in \D. \label{eq:cbf-contr-ineq}
    \end{align}
\end{prop}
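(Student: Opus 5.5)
The plan is to build $\kappa$ from local affine constant controls glued with a smooth partition of unity. This works because the strict inequality is open in $p$, and the condition is affine, hence convex, in $u$. Define the continuous function
\begin{align*}
    G(p, u) = dh_p\Big[X_0|_p + \sum_{i = 1}^m u^i X_i|_p\Big] + \alpha(h(p)),
\end{align*}
which is affine in $u$ for fixed $p$. It is continuous in $(p,u)$ because $h$ is smooth, each $X_i$ is continuous, and $\alpha$ is continuous. The strict CBF hypothesis says $\sup_{u \in \R^m} G(p, u) > 0$ for every $p \in \C$. So for each $p \in \C$ we can choose a constant $u_p \in \R^m$ with $G(p, u_p) > 0$.

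By continuity of $q \mapsto G(q, u_p)$, there is an open neighborhood $W_p \subseteq \M$ of $p$ on which $G(q, u_p) > 0$. Set $\D = \bigcup_{p \in \C} W_p$; this is an open neighborhood of $\C$. Also adjoin the open set $W_\ast = \M \setminus \C$ with $u_\ast = 0$, so that $\{W_p\}_{p\in\C} \cup \{W_\ast\}$ is an open cover of $\M$. Take a smooth partition of unity $\{\psi_p\} \cup \{\psi_\ast\}$ subordinate to this cover (it exists since $\M$ is a smooth manifold), and define
\begin{align*}
    \kappa(q) = \sum_{p} \psi_p(q) u_p + \psi_\ast(q) u_\ast = \sum_p \psi_p(q) u_p.
\end{align*}
This sum is locally finite, so $\kappa$ is smooth.

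The main obstacle is that $W_\ast$ meets $\D \setminus \C$, where the zero control need not satisfy the inequality. I would handle this by shrinking $\D$. On $\C$ we have $\psi_\ast = 0$, since $\operatorname{supp}\psi_\ast \subseteq W_\ast = \M \setminus \C$. Hence for $q \in \C$, affineness of $G$ in $u$ gives
\begin{align*}
    G(q, \kappa(q)) = \sum_p \psi_p(q) G(q, u_p),
\end{align*}
a convex combination of terms that are each strictly positive wherever $\psi_p(q) > 0$ (because $\operatorname{supp}\psi_p \subseteq W_p$). Thus $G(q, \kappa(q)) > 0$ on $\C$. Now $q \mapsto G(q, \kappa(q))$ is continuous, so
\begin{align*}
    \D := \{q \in \M : G(q, \kappa(q)) > 0\}
\end{align*}
is open and contains $\C$. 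This is the required neighborhood, and \eqref{eq:cbf-contr-ineq} holds on it by construction.

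The only care points are the following. First, smoothness of $\kappa$ needs only smooth bump functions, not smoothness of the $X_i$. Second, $\alpha$ is defined on all of $\R$, so $G$ makes sense off $\C$ where $h<0$. I expect no further difficulty.
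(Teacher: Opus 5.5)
Your proof is correct and uses essentially the same Artstein-type argument as the paper: pick local constant inputs where the strict inequality holds, glue them with a smooth partition of unity, and use affineness in $u$ to get a convex combination of strictly positive terms. The only difference is how you globalize. The paper builds $\kappa'$ on $\D' = \bigcup_q V_q$ and then applies a smooth extension lemma from a closed neighborhood $\overline{\D}\subseteq\D'$; you instead adjoin $\M\setminus\C$ with zero input to get a global partition of unity, then take $\D$ to be the open set where $G(q,\kappa(q))>0$, which avoids the extension lemma.
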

\begin{remark}
    If $h$ is instead a non-strict CBF, there may be no continuous controller for which \eqref{eq:cl-loop-cbf} is satisfied \cite{alyaseen2025continuity}. 
\end{remark}
\begin{proof}
    We follow an Artstein-type argument \cite{artstein1983stabilization}. Suppose that $h$ is a strict CBF for $\Sigma$. Fix $q \in \C$. Since $\Sigma$ is continuous, there is a neighborhood $V_q$ of $q$ and an input $u_q \in \R^m$ for which $dh_p [X_0|_p + \sum_{i = 1}^m u_q^i X_i|_p ] > -\alpha(h(p))$ for each $p \in V_q$. Let $\D' \defeq \bigcup_{q \in \C} V_q$, and take a locally finite partition of unity $\{\psi_q\}_{q \in \C}$ subordinate to $\{V_q\}_{q \in \C}$ with domain $\D'$ \cite[p. 43]{lee2012intro}.
    On $\D'$, define a controller $\kappa': p \mapsto \sum_{q \in \C} u_q \psi_q(p)$. Substituting $\kappa'(p)$ for $u$, we find that for each $p \in \D'$,
    \begin{align}
        &dh_p\Big[X_0|_p + \sum_{i = 1}^m X_i|_p \sum_{q \in \C} u_q^i \psi_q(p) \Big]\\
        &=  \sum_{q \in \C} \psi_q(p) dh_p \Big[X_0|_p + \sum_{i = 1}^m u_q^i X_i|_p \Big] > -\alpha(h(p)).
    \end{align}
    Let $\D$ be a neighborhood of $\C$ with $\overline \D \subseteq \D'$, and take $\kappa$ as any smooth extension of $\kappa'$ to $\M$ with $\kappa|_{\overline \D} = \kappa'|_{\overline \D}$.
\end{proof}

\subsection{Brockett's Necessary Condition}
Finally, we recall \textit{Brockett's necessary condition} \cite[Thm. 1.(iii)]{brockett1983asymptotic}. Though Brockett's original result was written for $C^1$ control systems and feedback laws, it is known to hold in the following strengthened form \cite[p. 660]{kvalheim2021necessary}.
\begin{theo}[Brockett]\label{theo:brockett}
    Consider a Euclidean control system $\dot x = f(x, u)$ on $\R^n \times \R^m$ and a point $x^* \in \R^n$. Suppose there is a $\kappa: \R^n \to \R^m$ for which the closed-loop $x \mapsto f(x, \kappa(x))$ is continuous, uniquely integrable, and locally asymptotically stable to $x^*$. Then, there is a neighborhood $\W$ of $0$ in $\R^n$ such that for all $z \in \W$, there exist $(x, u) \in \R^n \times \R^m$ with
    \begin{align}
        f(x, u) = z.
    \end{align}
\end{theo}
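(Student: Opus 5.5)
The plan is a proof by contradiction that turns a failure of the conclusion into a perturbed closed-loop field with no zero near $x^*$, and then contradicts this with index theory on a Lyapunov sublevel set. Write $X(x) \defeq f(x, \kappa(x))$ for the closed-loop field. It is continuous, uniquely integrable, and locally asymptotically stable to $x^*$, so $X(x^*) = 0$.

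First I would obtain a converse Lyapunov function. By the converse Lyapunov theorem for continuous, uniquely integrable vector fields (Kurzweil/Wilson), there is a smooth function $V$ on a neighborhood of $x^*$ with the following properties: $V$ is proper, $V(x^*) = 0$, $V > 0$ elsewhere, and $V$ strictly decreases along nonconstant trajectories. For small $c > 0$, the sublevel set $K = V^{-1}([0,c])$ is then a compact regular domain diffeomorphic to a closed ball. Here Wilson's theorem gives diffeomorphism to a disk in dimensions other than $4$ and $5$; I would avoid the issue because only a homotopy-equivalence statement is needed below. At each point of $\partial K$, $X$ points strictly inward.

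Next I would show that small perturbations of $X$ must vanish on $K$. Since $\partial K$ is compact and $X$ is strictly inward-pointing there, there is $\epsilon > 0$ such that $X - z$ is still inward-pointing on $\partial K$ whenever $|z| < \epsilon$. Take the vector field $-(X - z)$, which is outward-pointing on $\partial K$. By the Poincar\'e--Hopf theorem, or more elementarily by the Brouwer fixed point theorem applied to a disk, a continuous vector field that points outward on the boundary of a compact regular domain with nonzero Euler characteristic has a zero. Here $\chi(K) = \chi(\text{ball}) = 1$. Therefore there is some $x \in K$ with $X(x) = z$, that is, $f(x, \kappa(x)) = z$. Taking $u = \kappa(x)$ and $\W = \{|z| < \epsilon\}$ proves the claim. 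This is the picture in Fig.~\ref{fig:cbf-brockett}.

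The main obstacle is making the zero-existence step rigorous when $X - z$ is merely continuous and when $K$ might not be a smooth ball. I would handle this by approximating $X - z$ uniformly on $K$ by smooth fields that remain inward-pointing on $\partial K$. Each approximant then has a zero by Poincar\'e--Hopf, using only $\chi(K) = 1$; this Euler characteristic is available because $K$ deformation retracts onto $x^*$ along the flow. Compactness of $K$ then gives a zero of the limit field itself.
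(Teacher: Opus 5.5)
Your argument is correct and follows the route the paper indicates. The paper cites this result rather than proving it, but its comparison section reduces Brockett's condition to exactly your steps: a Wilson converse Lyapunov function, a sublevel set $V^{-1}(\R_{\leq c})$ treated as a strict-CBF safe set with nonzero Euler characteristic, and the perturbation/zero argument of Theorem~\ref{theo:str-c} via Proposition~\ref{prop:ec-zeros}. One small precision: state the converse Lyapunov property as $dV_x X(x) < 0$ for $x \neq x^*$ (e.g. $\dot V \leq -\alpha(V)$, which Wilson's theorem provides), since mere strict decrease along trajectories would not by itself make $c$ a regular value or make $X$ strictly inward-pointing on $\partial K$.
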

\begin{example}\label{ex:brockett-nh}
    Consider the \textit{nonholonomic integrator} \cite[p. 181]{brockett1983asymptotic} on $\R^3 \times \R^2$, specified by $(\dot x^1, \dot x^2, \dot x^3) = (u^1, u^2, x^2 u^1 - x^1u^2)$. For each nonzero $\epsilon$, there are no $(x, u) \in \R^3 \times \R^2$ for which $(0, 0, \epsilon) = f(x, u)$. As such, there is no neighborhood $\W$ of $0$ for which $f(x, u) = z$ has a solution $(x, u)$ for each $z \in \W$. Thus, Brockett's condition is not satisfied.
\end{example}

\section{The Zeros of Vector Fields}
At a high level, given a system $\Sigma$ and a proposed regular domain safe set $\C$, we wish to determine if there is a CBF and corresponding controller for $\C$ for which the closed-loop vector field is safe and continuous. We will approach this problem by proposing necessary conditions for each CBF variant of Definition \ref{defn:cbf}.

Similar to Brockett and Kvalheim \& Koditschek, we shall construct these necessary conditions from topological results on the zeros of vector fields. To this end, we now establish a relationship between safe set geometry and the zeros of safe vector fields.

\subsection{Ideas from Topology}
We begin by surveying several elementary yet important topological ideas. For precise definitions, we refer to \cite{hatcher2005algebraic}.

\subsubsection{Homotopy} Two continuous maps $f, g: X \to Y$ are \textit{homotopic} if there is a continuous $F: [0, 1] \times X \to Y$ with $F(0, x) = f(x)$ and $F(1, x) = g(x)$ for all $x \in X$. A map $f: X \to Y$ is a \textit{homotopy equivalence} if there is a map $g: Y \to X$ for which $f \circ g$ is homotopic to $\id_Y$ and $g \circ f$ to $\id_X$. If there exists a homotopy equivalence $f:X \to Y$, then $X$ and $Y$ are \textit{homotopy equivalent} spaces; intuitively, such $X$ and $Y$ are continuously deformable into one another.

\subsubsection{CW Complexes} \textit{CW complexes} are a rich class of spaces formed by gluing together simple building blocks called $n$-cells, sets of the form $\{x \in \R^n : \norm{x}_{\ell^2} \leq 1\}$, in an inductive procedure. We sketch this procedure below. A \textit{finite CW complex} $X$ is constructed:
\begin{figure}
    \centering
    \includegraphics[width=0.9\linewidth]{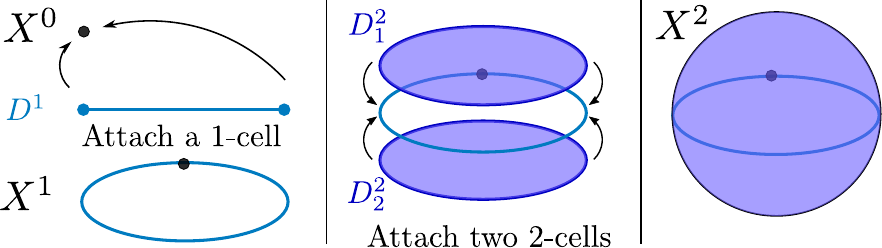}
    \caption{We construct $\sph^2$ as a finite CW complex. First, construct an equator by gluing the boundary of a 1-cell to a $0$-cell. Then, glue the boundaries of two 2-cells to the equator to form $\sph^2$. This construction shows $\chi(\sph^2) = 2$.}
    \label{fig:cw-complex}
\end{figure}
\begin{enumerate}
    \item Begin with the $0$-cells, a finite set of points $X^0$.
    \item Let $\{D^n_\alpha\}$ be a finite set of $n$-cells. Glue the boundary of each $D^n_\alpha$ to $X^{n-1}$ and name the resulting space $X^n$.
    \item Terminate at a finite $n$ and define $X \defeq \bigcup_{i = 0}^n X^i$.
\end{enumerate}
In Fig. \ref{fig:cw-complex}, we realize the 2-sphere as a finite CW complex. Many familiar spaces are realizable as finite CW complexes; notably, any compact manifold with boundary is homotopy equivalent to a finite CW complex \cite[p. 166]{hirsch1976differential}.

\subsubsection{Euler Characteristic}
The structure of CW complexes is conducive to defining simple objects that are invariant under homotopy equivalences. For a finite CW complex $X$, one such invariant is the \textit{Euler characteristic}, $\chi(X) \defeq \sum_{k = 0}^n (-1)^k (\text{\# of $k$-cells in $X$})$, where $n$ is the largest dimension of cell in $X$. Any two homotopy equivalent finite CW complexes share the \textit{same} Euler characteristic. As such, the Euler characteristic of a compact manifold with boundary may be taken as the Euler characteristic of any finite CW complex to which it is homotopy equivalent. 

\subsection{Application to Zeros of Vector Fields}
Now, we present a relationship between the Euler characteristic of a regular domain and the zeros of its vector fields.
\begin{prop}\label{prop:ec-zeros}
    Let $X \in \X^0(\M)$ and $\C$ be a compact regular domain in $\M$ with $\chi(\C) \neq 0$. If for all $p \in \partial \C$, $X_p$ is inward-pointing or tangent to $\partial \C$, then $X_p = 0_p$ for some $p \in \C$.
\end{prop}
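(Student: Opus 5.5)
Our strategy is to argue by contradiction and reduce to the Poincaré--Hopf theorem for manifolds with boundary. The classical form of that theorem needs a vector field that is \emph{strictly} outward-pointing (or strictly inward-pointing) on the boundary and has isolated zeros. So we will perturb $X$ to get such a field without creating new zeros. Suppose $X_p \neq 0_p$ for all $p \in \C$. Since $\C$ is compact and $X$ is continuous, for any Riemannian metric on $\M$ the quantity $m \defeq \min_{p \in \C}\norm{X_p}$ is positive.

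The first step is to produce a strictly inward-pointing vector field. Write $\C = h^{-1}(\R_{\geq 0})$ locally. For regular domains this description is global: any regular domain is the superlevel set of a smooth defining function with zero as a regular value, by \cite[Prop.~5.47]{lee2012intro}, so we take such an $h$. Let $N \defeq \grad h$. Then $dh_p N_p = \norm{dh_p}^2 > 0$ on $\partial \C$, so $N$ is inward-pointing there. Multiply $N$ by a bump function supported near $\partial\C$ and scale it so that $\norm{\epsilon N_p} < m$ on $\C$. Define $Y \defeq X + \epsilon N$. On $\partial \C$ we have $dh_p Y_p = dh_p X_p + \epsilon \norm{dh_p}^2 > 0$, because $dh_p X_p \geq 0$ by hypothesis. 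So $Y$ is strictly inward-pointing. Moreover $\norm{Y_p} \geq \norm{X_p} - \norm{\epsilon N_p} > 0$, so $Y$ has no zeros on $\C$.

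The second step is to apply Poincaré--Hopf. Because $Y$ is only continuous, we first approximate it on the compact set $\C$ by a smooth field $\tilde Y$ with $\sup_{\C}\norm{\tilde Y - Y}$ small enough that $\tilde Y$ remains nonvanishing and strictly inward-pointing on $\partial \C$. Both properties are open conditions and $\C$ is compact, so this is possible. Then $-\tilde Y$ is a smooth, nowhere-vanishing vector field on the compact manifold with boundary $\C$, and it is strictly outward-pointing on $\partial\C$. Poincaré--Hopf for manifolds with boundary says that the sum of the indices of the zeros of $-\tilde Y$ equals $\chi(\C)$. Since $-\tilde Y$ has no zeros, this sum is empty, so $\chi(\C) = 0$, contradicting the hypothesis. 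Hence $X$ must vanish somewhere in $\C$.

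The main obstacle is justifying the boundary version of Poincaré--Hopf, together with the sign convention that makes it apply to inward-pointing fields. The outward version gives $\chi(\C)$. For the inward version one would normally get $(-1)^n\chi(\C)$, and when $n$ is odd this differs from $\chi(\C)$ by a sign. This convention issue does not affect our argument, because we only use the fact that a nonvanishing field forces the index sum, and hence $\chi(\C)$, to be zero. If one prefers to avoid the boundary theorem altogether, there is a fallback: form the double $D\C$. Here $\chi(D\C) = 2\chi(\C) - \chi(\partial\C)$, and $\chi(\partial \C)$ is $0$ when $n$ is even. A flow-based alternative uses the semiflow of the inward field, which maps $\C$ into itself. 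The Lefschetz fixed point theorem (with $\C$ homotopy equivalent to a finite CW complex) then gives periodic points of arbitrarily small period, and a limit yields a zero. That route, however, requires uniqueness of solutions, which the perturbation-and-smoothing argument avoids. We therefore take the Poincaré--Hopf route as the primary plan.
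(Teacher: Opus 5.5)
Your argument is correct and is essentially the paper's: both add a small inward-pointing field to $X$ and apply Poincar\'e--Hopf on $\C$. Your contradiction via $m = \min_{\C}\norm{X_p} > 0$, with $\epsilon$ chosen so that no zero is created, plays exactly the role of the paper's sequence $X + \tfrac{1}{n}Y$ and compactness limit. You also derive the continuous inward-pointing case from smooth Poincar\'e--Hopf by smoothing and passing to $-\tilde Y$, which the paper simply cites from Kvalheim--Koditschek. One citation slip: Lee's Prop.~5.47 gives the implication ``regular sublevel set $\Rightarrow$ regular domain''. To get a global defining function for an arbitrary regular domain you need the converse result from Lee instead.
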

\begin{proof}
    First, suppose $X$ is inward-pointing at $\partial \C$. It is a direct consequence of the continuous Poincar\'e-Hopf theorem \cite[p. 35]{milnor1997topology} that $X$ has a zero in $\C$ \cite[Lem. 2.2]{kvalheim2021necessary}.
    
    Now, suppose $X$ points inward or tangent to $\partial \C$. There is a $Y \in \X(\M)$ that points inward at $\partial\C$ \cite[p. 200]{lee2012intro}. For each $\delta > 0$, $X + \delta Y$ points inward at $\partial \C$. Taking a sequence $\delta_n = \frac{1}{n}$, $n \in \mathbb N$, we use the inward-pointing case to obtain a sequence of points $\{p_n\} \subseteq \C$ for which $X_{p_n} + \delta_n Y_{p_n} = 0_{p_n}$ for all $n \in \mathbb N$. Since $\C$ is compact, $\{p_n\}$ has a convergent subsequence $\{p_{n_k}\}$ with limit $p \in \C$. Passing to the limit, we conclude $X_p = \lim_{k \to \infty} X_{p_{n_k}} + \delta_{n_k} Y_{p_{n_k}} = 0_p$.
\end{proof}

\section{Topological Obstructions for CBFs}
Above, we showed that, under the appropriate topological conditions, a vector field that points inward or tangent at the boundary of a regular domain has a zero within the regular domain. As CBF safe sets are regular domains, and CBF-based controllers enforce an inward-pointing or tangency condition, the existence of zeros guaranteed by Proposition \ref{prop:ec-zeros} can be fashioned into a necessary condition for CBFs. 

We accomplish this by the following approach, which is directly parallel to that of \cite{brockett1983asymptotic} and \cite{kvalheim2021necessary}. Given a system $\Sigma$ and a proposed CBF safe set $\C$ with $\chi(\C) \neq 0$, execute:
\begin{enumerate}
    \item \underline{Controller}: suppose $\exists \; \kappa$ for which the closed-loop $p \mapsto F(p, \kappa(p))$ is continuous \& satisfies a CBF condition.
    \item \underline{Perturbation}: pick $Z \in \X^0(\M)$ for which $F(p, \kappa(p)) - Z_p$ is inward-pointing or tangent to $\partial \C$.
    \item \underline{Zero}: conclude $\exists \; p$ for which $F(p, \kappa(p)) - Z_p = 0_p$.
    \item \underline{Input}: conclude $\exists \; (p, u)$ for which $F(p, u) = Z_p$.
\end{enumerate}

This last step produces a necessary condition for safety analogous to Brockett's original necessary condition. This argument---which also powers the necessary conditions of Brockett and Kvalheim \& Koditschek---will also underlie our conditions. We apply this argument in three cases, starting from the weakest assumptions and ending with the strongest.

\subsection{Non-strict CBF on the Safe Set}
First, we treat the case of a non-strict CBF on $\C$. Consider $\Sigma = (\U, \M, F)$ with a CBF $h$ and a controller $\kappa$ for which
\begin{align}
    dh_p F(p, \kappa(p)) \geq -\alpha(h(p)), \; \forall p \in \C, \label{eq:nonstr-C}
\end{align}
where $\alpha \in \K_\infty^e$. We now implement the four-step procedure above to obtain necessary conditions for this setting.
\begin{theo}\label{theo:nonstr-cbf-c}
    Let $h$ be a CBF for $\Sigma = (\U, \M, F)$ with compact $\C$ and $\chi(\C) \neq 0$. Let $\kappa$ be a controller for which $p \mapsto F(p, \kappa(p))$ is continuous and satisfies \eqref{eq:nonstr-C}. Then, for any $Z \in \X^0(\M)$ with $dh_p Z_p \leq 0$ on $\partial \C$, there exist $(p, u) \in \C \times \U$ for which 
    \begin{align}
        F(p, u) = Z_p.
    \end{align}
\end{theo}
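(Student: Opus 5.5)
The plan is to run the four-step procedure directly, with Proposition \ref{prop:ec-zeros} supplying the zero. Let $X \in \X^0(\M)$ denote the closed-loop vector field $X_p \defeq F(p, \kappa(p))$, which is continuous by hypothesis. Fix any $Z \in \X^0(\M)$ with $dh_p Z_p \leq 0$ for all $p \in \partial \C$. The candidate vector field is $W \defeq X - Z$. It is continuous, being a difference of continuous vector fields on $\M$, so $W \in \X^0(\M)$. No integrability of $W$ is needed, because Proposition \ref{prop:ec-zeros} only requires continuity.

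The key step is to check the boundary condition for $W$. For $p \in \partial \C$ we have $h(p) = 0$, hence $\alpha(h(p)) = 0$, and \eqref{eq:nonstr-C} gives $dh_p X_p \geq 0$. By linearity of $dh_p$,
\begin{align}
    dh_p W_p = dh_p X_p - dh_p Z_p \geq 0 - dh_p Z_p \geq 0.
\end{align}
Zero is a regular value of $h$, and $\C$ is the regular domain \eqref{eqn:safe-set}. By the characterization recalled before Theorem \ref{theo:geom-nagumo}, $W_p$ is therefore inward-pointing when $dh_p W_p > 0$ and tangent to $\partial \C$ when $dh_p W_p = 0$. So at every boundary point, $W_p$ is inward-pointing or tangent.

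We then apply Proposition \ref{prop:ec-zeros}. Its hypotheses are that $\C$ is a compact regular domain, that $\chi(\C) \neq 0$, and that $W$ is continuous and inward-pointing or tangent along $\partial \C$; all of these now hold. It yields a point $p \in \C$ with $W_p = 0_p$, that is, $F(p, \kappa(p)) = Z_p$. Setting $u \defeq \kappa(p) \in \U$ gives the required pair $(p, u) \in \C \times \U$ with $F(p, u) = Z_p$.

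There is no serious obstacle once Proposition \ref{prop:ec-zeros} is available. The one point needing care is that the inward/tangent dichotomy is stated via the sign of $dh_p$ only at boundary points. That is exactly where $\alpha(h(p))$ vanishes and where the hypothesis on $Z$ is imposed, so the condition on $W$ is checked only on $\partial \C$.
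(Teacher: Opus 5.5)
Your proposal is correct and follows the paper's proof essentially step for step. You use $\alpha(0)=0$ to get $dh_p[F(p,\kappa(p)) - Z_p] \geq 0$ on $\partial \C$, conclude that this continuous field is inward-pointing or tangent there, invoke Proposition \ref{prop:ec-zeros} for a zero in $\C$, and take $u = \kappa(p)$.
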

\begin{proof}
    Let $\kappa, h, Z$ be as in the statement of Theorem \ref{theo:nonstr-cbf-c}. Since $dh_p F(p, \kappa(p)) \geq 0$ and $dh_p Z_p \leq 0$ on $\partial \C$, it must be that $dh_p[F(p, \kappa(p)) - Z_p] \geq 0$ on $\partial \C$. Thus, for each $p \in \partial \C$, the continuous vector field $F(p, \kappa(p)) - Z_p$ is inward-pointing or tangent to $\partial \C$. By Proposition \ref{prop:ec-zeros}, there is a $p \in \C$ for which $F(p, \kappa(p)) - Z_p = 0_p$. We conclude the existence of a state-input pair $(p, u) \in \C \times \U$ for which $F(p, u) = Z_p$. 
\end{proof}

\subsection{Non-strict CBF on a Neighborhood of the Safe Set}
Next, we consider the case of a nonlinear control system $\Sigma = (\U, \M, F)$ with a CBF $h$ and a controller $\kappa$ for which
\begin{align}
    dh_p F(p, \kappa(p)) \geq -\alpha(h(p)), \; \forall p \in \D, \label{eq:nonstr-cbf-neigh}
\end{align}
where $\D$ is a neighborhood of $\C$ and $\alpha \in \K_\infty^e$. We begin by proving a technical lemma that adapts a classic result from Morse theory \cite[Thm. 3.1]{milnor1963morse} to our setting.
\begin{figure}
    \centering
    \includegraphics[width=0.85\linewidth]{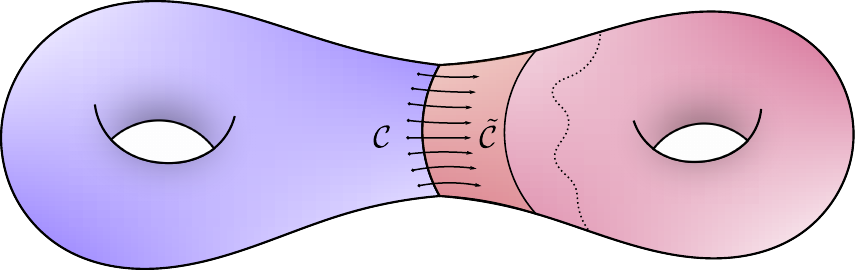}
    \caption{By \textit{flowing out} from $\C$, we can inflate $\C$ to a new set $\tilde \C$ diffeomorphic to $\C$, on which a safe vector field for $\C$ is inward-pointing.}
    \label{fig:flowout}
\end{figure}
\begin{lem}\label{lem:c-perturb}
    Let $\M$ be a smooth manifold and $h \in C^\infty(\M)$ have zero a regular value and a compact $0$-superlevel set $\C$. Suppose there is an $\alpha \in \K_\infty^e$ for which $X \in \X^0(\M)$ satisfies $dh_p X_p \geq -\alpha(h(p))$ on a neighborhood $\D$ of $\C$. Then, there is a compact regular domain $\tilde \C \subseteq \D$ for which:
    \begin{enumerate}
        \item $\C \subseteq \tilde \C$ and $\C$ is diffeomorphic to $\tilde \C$.
        \item $X$ is inward-pointing at $\partial \tilde \C$.
    \end{enumerate}
\end{lem}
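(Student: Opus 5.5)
We take $\tilde \C = h^{-1}([-\epsilon, \infty))$ for a small $\epsilon > 0$, the sublevel-set inflation suggested by Fig.~\ref{fig:flowout} and by the Morse-theoretic result \cite[Thm. 3.1]{milnor1963morse}. On $\partial \tilde \C = h^{-1}(-\epsilon)$ we have $dh_p X_p \geq -\alpha(-\epsilon) > 0$, because $\alpha$ is strictly increasing with $\alpha(0)=0$, so $\alpha(-\epsilon) < 0$. Hence, if $-\epsilon$ is a regular value of $h$, $X$ is inward-pointing at $\partial \tilde \C$, which gives item (2). What remains is to choose $\epsilon$ so that $\tilde \C$ is compact, lies in $\D$, has $-\epsilon$ as a regular value, and is diffeomorphic to $\C$.

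\textbf{Choosing $\epsilon$.} Since $\C$ is compact and $\D$ is open, we pick a precompact open neighborhood $V$ of $\C$ with $\overline V \subseteq \D$. The function $h$ is regular on the compact set $\partial \C = h^{-1}(0)$, so by continuity of $dh$ it is regular on an open neighborhood $N$ of $\partial\C$. We claim that for $\epsilon$ small, $h^{-1}([-\epsilon, 0]) \cap \overline V \subseteq N \cap V$, and no points of $h^{-1}([-\epsilon,\infty))$ lie outside $V$. The first part follows by compactness of $\overline V \setminus (N\cap V)$ together with the fact that $h<0$ there off $\C$. The second part is the main obstacle, since $\M$ is noncompact and far-away points could have $h$ slightly negative. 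To handle it, we argue that $h$ is bounded above by a negative constant on the compact set $\partial V$. We then restrict to the component structure: define $\tilde \C \defeq \C \cup (h^{-1}([-\epsilon,0]) \cap V)$. Because $h^{-1}([-\epsilon,0])\cap V$ stays near $\partial\C$ and away from $\partial V$, this set is closed in $\M$, compact, and contained in $\D$. Its boundary is $h^{-1}(-\epsilon)\cap V$, which is regular, so $\tilde\C$ is a regular domain, and the inward-pointing computation above applies there.

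\textbf{Diffeomorphism.} We choose a Riemannian metric and a bump function $\rho$ supported in $N$ and equal to $1$ on $h^{-1}([-\epsilon,0])\cap V$. We define the smooth vector field $W = \rho\, \grad h / \norm{\grad h}^2$, which is compactly supported and hence complete, and which satisfies $dh\, W = 1$ on the collar. Its flow $\theta$ then shifts $h$ by exactly $t$ on the collar, so $\theta_{-\epsilon}$ carries $\C$ diffeomorphically onto $\tilde \C$. We check this in two parts. Points of $\C$ far from the collar are moved within $\C \cup$ collar, and injectivity and surjectivity follow from $\theta_{-\epsilon}$ being a global diffeomorphism of $\M$. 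One also verifies $\theta_{-\epsilon}(\C) = \tilde\C$ by tracking $h$ along flow lines, using that $h$ is monotone along $\theta$ wherever $\rho>0$. This gives item (1) and $\C \subseteq \tilde\C$ is immediate.
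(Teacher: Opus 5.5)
Your proposal is correct and is essentially the paper's argument. Both run Milnor's collar construction \cite[Thm. 3.1]{milnor1963morse} with the normalized gradient field $\grad h/\norm{\grad h}^2$ localized near $\partial \C$, but in opposite order: you define $\tilde \C$ as the localized superlevel set $V \cap h^{-1}([-\epsilon,\infty))$, so item (2) is immediate, and use the flow to prove item (1), whereas the paper defines $\tilde \C = \varphi_{t_0}(\C)$, so item (1) is immediate, and then identifies it with a superlevel set of $h$ restricted to $\Int \C_{t_1}$. The only other difference is where regularity comes from: the paper gets $dh_p \neq 0$ on all of $\overline \D \setminus \Int \C$ directly from the CBF inequality (there $-\alpha(h(p)) > 0$ forces $dh_p X_p > 0$), which would also have handed you regularity of $-\epsilon$ for free, while you obtain it from continuity of $dh$ near $\partial \C$ by shrinking $\epsilon$.
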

\begin{proof}
    The key idea is depicted in Fig. \ref{fig:flowout}: by \textit{flowing out} from $\C$ via the flow of an appropriately-constructed, outward-pointing vector field $Y$, we can inflate $\C$ to the desired $\tilde \C$.

    First, we construct the vector field $Y$. Fix any Riemannian metric on $\M$. By shrinking $\D$ if necessary, we assume without loss of generality that it has compact closure. Since $dh_p X_p \geq -\alpha(h(p))$ for all $p \in \D$, it follows from continuity of $p \mapsto dh_p X_p$ that $dh_p X_p \geq -\alpha(h(p))$ for all $p \in \overline \D$. Since zero is a regular value of $h$ and $\alpha(h(p)) < 0$ on $\overline \D \setminus \C$, we conclude that $h$ has no critical points in $\overline \D \setminus \Int \C$. Thus,
    \begin{align}
        p \mapsto - \tfrac{\grad h|_p}{\norm{\grad h|_p}^2}, \; p \in \overline \D \setminus \Int \C, \label{eq:Y-defn}
    \end{align}
    is smooth and well-defined on $\overline \D \setminus \Int \C$. Since $\overline \D \setminus \Int \C$ is compact, we may extend \eqref{eq:Y-defn} to a smooth vector field $Y \in \X(\M)$ that vanishes outside of a compact set \cite[Lem. 8.6]{lee2012intro}. Since $Y$ is zero outside of a compact set, it has a smooth flow $\varphi_t: \M \to \M$ defined for all $t \in \R$ \cite[Thm. 9.16]{lee2012intro}. By construction, $Y$ points outward at the boundary of $\C$. As such, for any $p \in \C$ and $t < 0$, $\varphi_t(p) \in \Int \C$. Similarly, for any $p \in \partial \C$ and $t > 0$, $\varphi_t(p) \in \M \setminus \C$. Thus, as depicted by Figure \ref{fig:flowout}, the positive flow of $Y$ expands $\C$ into $\M \setminus \C$. 

    Now, we characterize the image of $\C$ under the flow of $Y$ for small positive times. By continuity of $\varphi$ and compactness of $\C$, there is a $t_1 > 0$ for which $\varphi_{t}(\C) \subseteq \D$ $\forall t \in [0, t_1]$. Since $Y$ is outward-pointing, for any $t \in [0, t_1]$ and $p \in \partial \C$, $\varphi_t(p) \in \D \setminus \Int \C$. Thus, for any $p \in \partial \C$ and $t \in [0, t_1]$,
    \begin{align}
        \tfrac{d}{d\tau}\big|_{\tau = t} h(\varphi_{\tau}(p)) &= - \tfrac{\braket{\grad h|_{\varphi_t(p)}, \grad h|_{\varphi_t(p)}}}{\norm{\grad h|_{\varphi_t(p)}}^2} = -1.
    \end{align}
    We conclude that $h(\varphi_{t}(p)) = -t$ for any $t \in [0, t_1]$ and $p \in \partial \C$. In context of Figure \ref{fig:flowout}, this suggests a connection between $\varphi_t(\C)$ and the $-t$-superlevel set of $h$. By making this connection explicit, we shall arrive at the desired $\tilde \C$. 
    
    For each $t \in [0, t_1]$, define $\C_t \defeq \varphi_{t}(\C)$. Each $\varphi_t$ is a diffeomorphism, and therefore preserves the boundary and interior of $\C$. Thus, for any $t \in [0, t_1]$, $\partial \C_t = \varphi_{t}(\partial \C)$ and $\Int \C_t = \varphi_t(\Int \C)$. Further, since $\varphi_t$ is a diffeomorphism, each $\C_t$ is a compact regular domain. Now, define $h_1: \Int \C_{t_1} \to \R$ as the restriction of $h$ to $\Int \C_{t_1}$ and fix a time $t_0 \in (0, t_1)$. Following \cite[Thm. 3.1]{milnor1963morse}, $\C_{t_0} = h_1^{-1}(\R_{\geq -t_0})$. 
    
    Define $\tilde \C \defeq \C_{t_0}$. This is a compact regular domain contained in $\D$, containing $\C$, and diffeomorphic to $\C$. Since $\tilde \C$ is an $h_1$-superlevel set and $h_1 = h|_{\Int \C_{t_1}}$, any vector $v_p \in T_p \M$ with $p \in \partial \tilde \C$ satisfying $dh_p v_p > 0$ is inward-pointing at $\partial \tilde \C$. As $dh_p X_p \geq -\alpha(h(p))$ and $\alpha(h(p)) < 0$ on $\partial \tilde \C$, we conclude that $X$ points inward at $\partial \tilde \C$. 
\end{proof}

With this result in hand, we are ready to derive our next condition. To state this condition in topological language, we require the following additional concept \cite{kvalheim2021necessary}. Write $0_{T\M} \defeq \{0_p \in T_p \M : p \in \M\}$ for the subset of $T\M$ containing the zero vector of each tangent space. A \textit{neighborhood} of $0_{T\M}$ is an open subset of $T\M$ containing $0_{T\M}$. For instance, a neighborhood $\W$ of $0_{T\R^n}$ could be specified by $\R^n \times W$, for $W$ a neighborhood of $0$ in $\R^n$.
\begin{theo}\label{theo:nonstr-D}
    Let $h$ be a CBF for $\Sigma = (\U, \M, F)$ on a neighborhood $\D$ with compact $\C$ and $\chi(\C) \neq 0$. Let $\kappa$ be a controller for which $p \mapsto F(p, \kappa(p))$ is continuous and satisfies \eqref{eq:nonstr-cbf-neigh}. For every neighborhood $\Vc$ of $\C$, there is a neighborhood $\W$ of $0_{T\M}$ such that for all $Z \in \X^0(\M)$ taking values in $\W$, there exist $(p, u) \in \Vc \times \U$ with
    \begin{align}
        F(p, u) = Z_p.
    \end{align}
\end{theo}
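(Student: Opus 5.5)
We will combine Lemma \ref{lem:c-perturb} with the inward-pointing case of Proposition \ref{prop:ec-zeros}, following the four-step procedure. Write $X_p \defeq F(p, \kappa(p))$. This is a continuous vector field satisfying $dh_p X_p \geq -\alpha(h(p))$ on $\D$.

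First we fix the neighborhood $\Vc$ of $\C$ and apply Lemma \ref{lem:c-perturb}. We use the smaller neighborhood $\D \cap \Vc$ in place of $\D$; the hypothesis of the lemma still holds there. This yields a compact regular domain $\tilde \C \subseteq \D \cap \Vc$ containing $\C$ and diffeomorphic to $\C$, with $X$ inward-pointing at $\partial \tilde \C$. Since the Euler characteristic is invariant under diffeomorphism (indeed under homotopy equivalence), $\chi(\tilde \C) = \chi(\C) \neq 0$.

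Next we build $\W$ so that small perturbations keep the field inward-pointing on $\partial \tilde \C$. We need a defining function for $\tilde \C$. By the construction in the lemma, $\tilde \C$ is the $(-t_0)$-superlevel set of $h$ restricted to $\Int \C_{t_1}$. Hence a vector $v_p$ at $p \in \partial \tilde \C$ is inward-pointing whenever $dh_p v_p > 0$. The function $p \mapsto dh_p X_p$ is continuous on the compact set $\partial \tilde \C$. It is bounded below by $-\alpha(-t_0) > 0$, so it attains a positive minimum $2\epsilon > 0$.

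Fix a Riemannian metric. We define
\begin{align}
    \W \defeq \{v_p \in T\M : |dh_p v_p| < \epsilon\}.
\end{align}
This set is open in $T\M$, since $v_p \mapsto dh_p v_p$ is continuous on $T\M$, and it contains $0_{T\M}$. Now let $Z \in \X^0(\M)$ take values in $\W$. On $\partial \tilde \C$ we then have
\begin{align}
    dh_p(X_p - Z_p) > 2\epsilon - \epsilon > 0,
\end{align}
so $X - Z$ is continuous and inward-pointing at $\partial \tilde \C$. Proposition \ref{prop:ec-zeros}, applied to the compact regular domain $\tilde \C$, gives a point $p \in \tilde \C \subseteq \Vc$ with $F(p, \kappa(p)) = Z_p$. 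Taking $u = \kappa(p)$ completes the argument.

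The main obstacle is justifying that ``$dh_p v_p > 0$ on $\partial \tilde \C$ implies inward-pointing.'' This is not automatic, because $\tilde \C$ is not a superlevel set of $h$ on all of $\M$. The lemma's final paragraph already asserts this fact, so we can invoke it directly. If a cleaner route is preferred, $\W$ could instead be defined using an inward-pointing criterion from a local defining function of $\partial \tilde \C$, but using $dh$ keeps $\W$ explicit.

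Note that $\W$ depends on $\Vc$ through $\tilde \C$, which matches the quantifier order in the statement.
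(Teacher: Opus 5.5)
Your proof is correct and follows essentially the same route as the paper: apply Lemma \ref{lem:c-perturb} on $\D \cap \Vc$ to get $\tilde \C$ with $\chi(\tilde \C) = \chi(\C) \neq 0$, choose $\W$ so that $F(\cdot, \kappa(\cdot)) - Z$ stays inward-pointing on $\partial \tilde \C$, and invoke Proposition \ref{prop:ec-zeros}. Where you differ is that you construct $\W$ explicitly as $\{v_p : |dh_p v_p| < \epsilon\}$, which supplies the continuity step the paper only asserts; this uses $h = -t_0$ on $\partial \tilde \C$ and the fact that $h$ locally defines $\tilde \C$ there, both of which come from the lemma's proof rather than its statement, and the Riemannian metric you fix is never needed.
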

\begin{proof}
    Fix any neighborhood $\V$ of $\C$. Applying Lemma \ref{lem:c-perturb} with a neighborhood $\V \cap \D$ of $\C$, we conclude the existence of a compact regular domain $\tilde \C \subseteq \V \cap \D$ containing $\C$ for which the closed-loop vector field $p \mapsto F(p, \kappa(p))$ points strictly inward at $\partial \tilde \C$. As $\tilde \C$ is also diffeomorphic to $\C$, it is homotopy equivalent to $\C$; hence $\chi(\tilde \C) = \chi(\C) \neq 0$.

    Since $p \mapsto F(p, \kappa(p))$ is continuous and inward-pointing on $\partial \tilde \C$, there is a neighborhood $\W$ of $0_{T\M}$ such that, for any vector field $Z$ with $Z_p \in \W$ for all $p \in \M$, the perturbed vector field $p \mapsto F(p, \kappa(p)) - Z_p$ remains inward-pointing at $\partial \tilde \C$. Fix any continuous vector field $Z$ with $Z_p \in \W$ for all $p$. By Proposition \ref{prop:ec-zeros}, $p \mapsto F(p, \kappa(p)) - Z_p$ has a zero in $\tilde \C$. Thus, there are $(p, u) \in \tilde \C \times \U \subseteq \V \times \U$ with $F(p, u) = Z_p$.
\end{proof}
Theorem \ref{theo:nonstr-D} provides a necessary condition analogous to \cite[Thm. 3.2]{kvalheim2021necessary}. Unlike this result, however, Theorem \ref{theo:nonstr-D} leans on the geometry of $\C$ to avoid assuming unique integrability.

\subsection{Strict CBF}
Finally, we consider a system $\Sigma = (\U, \M, F)$ with a CBF $h$ with $\alpha \in \K_\infty^e$ and a controller $\kappa$ for which
\begin{align}
    dh_p F(p, \kappa(p)) > -\alpha(h(p)), \; \forall p \in \C. \label{eq:str-cbf}
\end{align}
\begin{theo}\label{theo:str-c}
    Let $h$ be a strict CBF for $\Sigma = (\U, \M, F)$ with compact $\C$ and $\chi(\C) \neq 0$. Let $\kappa$ be a controller for which $p \mapsto F(p, \kappa(p))$ is continuous and satisfies \eqref{eq:str-cbf}. Then, there is a neighborhood $\W$ of $0_{T\M}$ such that for all $Z \in \X^0(\M)$ taking values in $\W$, there exist $(p, u) \in \C \times \U$ with
    \begin{align}
        F(p, u) = Z_p.
    \end{align}
\end{theo}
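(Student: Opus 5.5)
The plan is to run the four-step procedure directly on $\C$ itself, with no inflation of the safe set as in Lemma \ref{lem:c-perturb}. The strict inequality \eqref{eq:str-cbf} gives a quantitative margin at the boundary, and that margin absorbs small perturbations. On $\partial \C$ we have $h(p) = 0$, so \eqref{eq:str-cbf} reduces to $dh_p F(p, \kappa(p)) > 0$ for all $p \in \partial \C$. Hence the closed-loop vector field $X: p \mapsto F(p, \kappa(p))$ is strictly inward-pointing at $\partial \C$.

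First, I would turn this into a uniform margin. Fix a Riemannian metric on $\M$. The function $p \mapsto dh_p X_p$ is continuous and positive on $\partial \C$, which is compact as a closed subset of the compact set $\C$. So there is $c > 0$ with $dh_p X_p \geq 2c$ on $\partial \C$. The function $p \mapsto \norm{dh_p}$ (operator norm with respect to the metric, equivalently $\norm{\grad h|_p}$) is continuous, hence bounded above on $\partial\C$ by some $L > 0$. Define a function $\rho: \M \to \R_{>0}$, for instance $\rho(p) \defeq c/(1+\norm{\grad h|_p})$, and set
\begin{align}
    \W \defeq \{ v_p \in T\M : \norm{v_p} < \rho(p) \}.
\end{align}
This set is open in $T\M$, since the norm and $\rho$ are continuous on $T\M$, and it contains $0_{T\M}$. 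So $\W$ is a neighborhood of $0_{T\M}$.

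Next, I would carry out the perturbation and zero steps. Let $Z \in \X^0(\M)$ take values in $\W$. For $p \in \partial \C$ we have $|dh_p Z_p| \leq \norm{\grad h|_p}\norm{Z_p} < c$. It follows that $dh_p[X_p - Z_p] > 2c - c > 0$, so $X - Z$ is a continuous vector field that is inward-pointing at $\partial\C$. Since $\C$ is a compact regular domain with $\chi(\C) \neq 0$, Proposition \ref{prop:ec-zeros} gives a point $p \in \C$ with $F(p, \kappa(p)) - Z_p = 0_p$. Setting $u = \kappa(p) \in \U$ then yields $(p, u) \in \C \times \U$ with $F(p, u) = Z_p$.

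The main subtlety is producing a single neighborhood $\W$ of $0_{T\M}$ that works for every $Z$. This is where the strict inequality and compactness of $\partial \C$ are essential: the non-strict case of Theorem \ref{theo:nonstr-cbf-c} has no margin and only allows $Z$ with $dh_p Z_p \leq 0$. Since $\W$ only needs to control $Z$ on $\partial\C$, the specific choice of $\rho$ off $\partial \C$ is irrelevant as long as $\W$ is open and contains the zero section.
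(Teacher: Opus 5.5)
Your proof is correct and follows the same route as the paper. You reduce \eqref{eq:str-cbf} to strict inward-pointing of the closed loop on $\partial \C$, choose $\W$ so that subtracting any $Z$ valued in $\W$ preserves this, and then apply Proposition \ref{prop:ec-zeros} directly on $\C$. The only difference is that you explicitly construct $\W$ from a uniform margin on the compact set $\partial \C$ and the radius function $\rho$, whereas the paper simply asserts that such a $\W$ exists by continuity; your constant $L$ is never used and can be dropped.
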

\begin{proof}
    As $p \mapsto F(p, \kappa(p))$ is continuous and inward-pointing at $\partial \C$, there is a neighborhood $\W$ of $0_{T\M}$ for which $p \mapsto F(p, \kappa(p)) - Z_p$ is inward-pointing at $\partial \C$ for any $Z$ with $Z_p \in \W$ for all $p \in \M$. Fix any continuous vector field $Z$ with $Z_p \in \W$ for all $p \in \M$. By Proposition \ref{prop:ec-zeros}, $p \mapsto F(p, \kappa(p)) - Z_p$ has a zero in $\C$. Thus, there exists a pair $(p, u) \in \C \times \U$ satisfying $F(p, u) = Z_p$. 
\end{proof}
Theorem \ref{theo:str-c} recovers the necessary condition of \cite[Thm. 3.6]{kvalheim2021necessary} in the CBF setting. Here we again leverage the safe set geometry to avoid the unique integrability assumption of \cite{kvalheim2021necessary}.

\begin{cor}[Affine Strict CBF]\label{cor:aff-str-cbf}
    Let $h$ be a strict CBF for a continuous control-affine system $\Sigma = (\M, X_0, ..., X_m)$ with compact $\C$ and $\chi(\C) \neq 0$. Then, there is a neighborhood $\W$ of $0_{T\M}$ such that for all $Z \in \X^0(\M)$ taking values in $\W$, there exist $(p, u) \in \C \times \R^m$ with
    \begin{align}
        X_0|_p + \sum_{i = 1}^m u^i X_i|_p = Z_p.
    \end{align}
\end{cor}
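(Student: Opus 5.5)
The plan is to chain Proposition \ref{prop:strict-aff} into Theorem \ref{theo:str-c}, treating the control-affine system $\Sigma = (\M, X_0, \dots, X_m)$ through its induced nonlinear control system $\tilde \Sigma = (\R^m, \M, F)$ with $F(p, u) = X_0|_p + \sum_{i=1}^m u^i X_i|_p$. By definition, $h$ being a strict CBF for $\Sigma$ means exactly that it is a strict CBF for $\tilde \Sigma$. The corollary does not assume a controller is given, so the first step is to manufacture one.

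First, I will invoke Proposition \ref{prop:strict-aff}. Since $\Sigma$ is continuous and $h$ is a strict CBF, it yields a smooth control law $\kappa: \M \to \R^m$ and a neighborhood $\D$ of $\C$ on which \eqref{eq:cbf-contr-ineq} holds. In particular, the inequality holds on $\C \subseteq \D$, so $\kappa$ satisfies \eqref{eq:str-cbf} for $\tilde \Sigma$.

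Second, I will verify continuity of the closed-loop vector field. In a local chart this field is $p \mapsto X_0|_p + \sum_i \kappa^i(p) X_i|_p$, which is a finite sum of products of continuous functions $\kappa^i$ with continuous vector fields $X_i$. Hence it lies in $\X^0(\M)$.

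Third, with $\C$ compact and $\chi(\C) \neq 0$, all hypotheses of Theorem \ref{theo:str-c} hold for $\tilde \Sigma$, $h$, and $\kappa$. The theorem then supplies a neighborhood $\W$ of $0_{T\M}$ such that every $Z \in \X^0(\M)$ taking values in $\W$ admits $(p, u) \in \C \times \R^m$ with $F(p, u) = Z_p$. Unpacking $F$ gives exactly the displayed equation.

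The only step needing care is the first: the strict CBF condition is a statement about a supremum, and we need an actual continuous controller achieving it. Proposition \ref{prop:strict-aff} handles this via the partition-of-unity argument, which is where control-affinity is essential, since convex combinations of good inputs remain good. Beyond that, the proof is bookkeeping.
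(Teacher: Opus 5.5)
Your proposal is correct and matches the paper's proof: it invokes Proposition~\ref{prop:strict-aff} to obtain a smooth controller satisfying \eqref{eq:str-cbf} on $\C$, then applies Theorem~\ref{theo:str-c} to the induced nonlinear system. Your explicit check that the closed-loop vector field is continuous is a detail the paper leaves implicit.
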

\begin{proof}
    By Proposition \ref{prop:strict-aff}, there exists a smooth controller $\kappa: \M \to \R^m$ satisfying \eqref{eq:str-cbf} for all $p \in \C$. Thus, the conditions of Theorem \ref{theo:str-c} are satisfied, and the result follows.
\end{proof}

\subsection{Examples}
\begin{example}\label{ex:nonstr-cbf}
    Let $\U = \{u \in \R^2 : \norm{u}_{\ell^2} = 1\}$, and consider the system $\Sigma = (\U, \R^2, F)$ on $\R^2 \times \U$ defined by $\dot x = x + u$. We wish to find a CBF and a continuous controller for $\Sigma$ satisfying \eqref{eq:nonstr-C} for a safe set $\C = \{x \in \R^2 : x^\top x \leq 1\}$, which has $\chi(\C) = 1$. Although this task \textit{seems} feasible, as $h: x \mapsto 1 - x^\top x$ is a CBF on $\C$, we show it is impossible. The vector field $Z : x \mapsto (x, x) \in T\R^2$ is continuous and outward-pointing at $\partial \C$. Thus, it satisfies $dh_x Z_x \leq 0$ on $\partial \C$ for any candidate CBF $h$ for $\C$. However, there are no $(x, u) \in \C \times \U$ for which $x + u = x$, as this would require $u = 0 \notin \U$. In turn, there are no $(x, u) \in \C \times \U$ for which $F(x, u) = Z_x$. By Theorem \ref{theo:nonstr-cbf-c}, no pair of a CBF and controller with continuous closed-loop satisfying \eqref{eq:nonstr-C} for $\C$ can exist.
\end{example}
As in the case of stabilization \cite[Prop. 0]{pomet1992explicit}, \textit{kinematic nonholonomic systems} fail to meet the necessary conditions of Theorems \ref{theo:nonstr-D} and \ref{theo:str-c}. We consider one such example below.

\begin{example}
    Consider a kinematic model of an underactuated satellite on $SO(3)$, specified by $\dot R = u^1 R\hat e_1 + u^2 R \hat e_2$, for $e_1 = (1, 0, 0)$, $e_2 = (0, 1, 0)$, and $\wedge$ the hat map. Fix a regular domain $\C \subseteq SO(3)$ (which is automatically compact) with $\chi(\C) \neq 0$. Consider the vector field specified by $Z_\epsilon : R \mapsto \epsilon R \hat e_3$, for $e_3 = (0, 0, 1)$; for every neighborhood $\W$ of $0_{TSO(3)}$, there is an $\epsilon \neq 0$ for which $Z_\epsilon$ takes values in $\W$. However, there are no $(R, u) \in SO(3) \times \R^2$ for which $\epsilon R \hat e_3 = u^1 R \hat e_1 + u^2 R \hat e_2$. Corollary \ref{cor:aff-str-cbf} implies there is no strict CBF for $\Sigma$ with safe set $\C$. Similarly, Theorem \ref{theo:nonstr-D} implies there is no pair of a CBF on a neighborhood of $\C$ and corresponding safe and continuous controller $\kappa$ for $\C$.
\end{example}

\subsection{Comparison with Brockett's Necessary Condition}
\begin{table}[]
    \centering
    \caption{Comparison of Brockett \& CBF Conditions.}
    \begin{tabular}{|c|c|c|c|}
        \hline
        \; & $\begin{matrix}
            \text{Unique}\\
            \text{Integr.}
        \end{matrix}$  
        & $\begin{matrix}
            \text{Perturbation}
        \end{matrix}$ 
        & $\begin{matrix}
            \text{Necessary}\\
            \text{Condition}
        \end{matrix}$\\
        \hline
        \hline
        CBF & \xmark & $\{Z : dh_p Z_p \leq 0\}$ & $\begin{matrix}
            \exists (p, u) \in \C \times \U:\\
            F(p, u) = Z_p
        \end{matrix}$\\
        CBF on $\D$ & \xmark & $\{Z : Z_p \in \W \}$ & $\begin{matrix}
            \exists (p, u) \in \V \times \U:\\
            F(p, u) = Z_p
        \end{matrix}$\\
        Strict CBF & \xmark & $\{Z : Z_p \in \W \}$ & $\begin{matrix}
            \exists (p, u) \in \C \times \U:\\
            F(p, u) = Z_p
        \end{matrix}$\\
        Brockett & \checkmark & $z \in \W$ & $\begin{matrix}
            \exists (x, u) \in \R^n \times \R^m:\\
             f(x, u) = z
        \end{matrix}$\\
         \hline
    \end{tabular}
    \label{tab:brockett-CBF}
\end{table}
As demonstrated in Table \ref{tab:brockett-CBF}, the necessary conditions proposed by Theorems \ref{theo:nonstr-D} and \ref{theo:str-c} share remarkable structural similarities with Brockett's necessary condition.

These similarities are illuminated by the duality between Lyapunov and barrier functions. Consider a continuous, uniquely integrable vector field $\dot x = f(x)$ with asymptotically stable equilibrium point $x^*$. By the converse Lyapunov theorem, there is a smooth, proper Lyapunov function $V$ for $x^*$ satisfying $\dot V \leq -\alpha(V)$ on its domain of attraction \cite{wilson1969smoothing}. Each Lyapunov sublevel set $V^{-1}(\R_{\leq c})$ is a compact regular domain with nonzero Euler characteristic. Thus, for each $c > 0$ the function $h_c = c - V$ satisfies a barrier inequality in a neighborhood of $x^*$, and settings parallel to Theorems \ref{theo:nonstr-D} and \ref{theo:str-c} are recovered. Using this simple method to re-imagine Lyapunov sublevel sets as safe sets, the parallels between the necessary conditions become evident. 

The main differences between the necessary conditions---unique integrability of the closed loop \& use of geometric language---are explained as follows. The proof of Brockett's condition requires a converse Lyapunov theorem, which in turn requires unique integrability---since we \textit{do not} rely on converse results, we do without this assumption. The remaining differences between the necessary conditions (for example the change from $z$ to $Z$ and $\W$ to $\W$ containing $0_{T\M}$) are due to the transition between the Euclidean setting of Theorem \ref{theo:brockett} and geometric settings of Theorems \ref{theo:nonstr-D} and \ref{theo:str-c}.

\section{Conclusion}
In this work, we derived Brockett-like necessary conditions for the existence of CBFs and their safety-critical controllers. By utilizing the geometry of CBF safe sets, we provided streamlined derivations of these necessary conditions in three key cases: a non-strict CBF on $\C$, a non-strict CBF on a neighborhood of $\C$, and a strict CBF, all under the simple assumption of continuity of the closed-loop system. Future work includes applying these conditions to facilitate the efficient automated synthesis of CBFs.

\bibstyle{ieeetr}
\bibliography{references}

\end{document}